\documentclass[sigconf]{acmart}
\usepackage{graphicx}
\AtBeginDocument{%
  }

\setcopyright{acmlicensed}
\copyrightyear{2026}
\acmYear{2026}
\acmDOI{XXXXXXX.XXXXXXX}
\acmConference[Conference acronym 'XX]{Make sure to enter the correct
  conference title from your rights confirmation email}{June 03--05,
  2026}{Woodstock, NY}
\acmISBN{978-1-4503-XXXX-X/2026/06}

\usepackage{newverbs}
\usepackage{fancyvrb}
\usepackage{cprotect}
\usepackage{placeins}
\usepackage{float}
\usepackage{xcolor}

\newcommand{\TTJ}{\mathsf{TTJ}}

\newcommand{\SAAD}{\mathsf{CODA}}
\newcommand{\Query}{Q}
\newcommand{\RIP}{\mathsf{RIP}}
\newcommand{\Repls}[1]{\operatorname{Repl}^*(#1)}
\newcommand{\Repl}[2]{\operatorname{Repl}(#1,#2)}
\newcommand{\Virt}[1]{\operatorname{Virt}(#1)}
\newcommand{\Par}[2]{\operatorname{Par}(#1, #2)}
\newcommand{\Child}[2]{\operatorname{Child}(#1, #2)}
\newcommand{\Root}[1]{\operatorname{Root}(#1)}
\newcommand{\Merge}[2]{\operatorname{Mrg}(#1,#2)}
\newcommand{\Choose}[1]{\operatorname{Choose}(#1)}

\newcommand{\join}{\Join^T}
\newcommand{\concat}{\mathbin{+\mkern-5mu+}}

\newcommand{\matchingtuples}{\mathsf{MT}}

\newcommand{\keys}{\textsf{keys}}
\newcommand*{\myverb}[1]{\protect\Verb!#1!\xspace}
\newcommand*{\Router}{\protect\Verb!Router!\xspace}

\newcommand*{\Rinner}{\protect\Verb!Rinner!\xspace}

\newcommand*{\Row}{\protect\Verb!r!\xspace}

\newcommand*{\None}{\protect\Verb!None!\xspace}
\newcommand*{\EmptyList}{\myverb{[]}}
\newcommand*{\Equal}{\myverb{=}}
\newcommand*{\aMatchingTuple}{\myverb{m}}
\newcommand{\Leaf}{\ell}
\newcommand{\Set}[1]{\{#1\}}
\newcommand{\SetOf}[2]{\{#1 \mid #2\}}
\newcommand{\Attr}[1]{\Sigma(#1)}
\newcommand*{\MaxRelSize}{\texttt{max\_rel\_size}\xspace}

\newcommand*{\MaxQSize}{\texttt{max\_size}\xspace}

\usepackage[utf8]{inputenc}
\usepackage{relsize}
\usepackage{pifont}
\newcommand{\bigsymbol}[1]{\raisebox{-0.8pt}{\textlarger{#1}}}
\newcommand{\circleone}{\bigsymbol{\ding{172}}}
\newcommand{\circletwo}{\bigsymbol{\ding{173}}}
\newcommand{\circlethree}{\bigsymbol{\ding{174}}}
\newcommand{\circlefour}{\bigsymbol{\ding{175}}}
\newcommand{\circlefive}{\bigsymbol{\ding{176}}}
\newcommand{\circlesix}{\bigsymbol{\ding{177}}}

\usepackage{cleveref}
\crefname{algocf}{Algorithm}{Algorithms}
\Crefname{algocf}{Algorithm}{Algorithms}
\AtBeginDocument{}
\crefname{figure}{Figure}{Figures}
\Crefname{figure}{Figure}{Figures}
\crefname{table}{Table}{Tables}
\Crefname{table}{Table}{Tables}
\crefname{section}{Section}{Sections}
\Crefname{section}{Section}{Sections}
\crefname{algorithm}{Algorithm}{Algorithms}
\Crefname{algorithm}{Algorithm}{Algorithms}

\usepackage{listings}
\usepackage{xcolor}
\usepackage[linesnumbered,ruled,vlined, nofillcomment, algosection]{algorithm2e}

\SetCommentSty{mycommfont}

\SetKwInput{KwInput}{Input}                
\SetKwInput{KwOutput}{Output}          
\SetKwInput{KwGlobalVariable}{Instance variables}
\SetKwInput{KwGlobalVariables}{Instance variables}
\SetKwInput{KwPurpose}{Purpose}
\SetKw{Continue}{continue}

\begin{document}

\title[Computer-Orchestrated Design of
  Algorithms]{Computer-Orchestrated Design of Algorithms: From Join
  Specification to Implementation}

\author{Zeyuan Hu}
\affiliation{%
  \institution{University of California, Los Angeles}
  \city{Los Angeles}
  \state{California}
  \country{USA}
}
\email{zeyuanhu@cs.ucla.edu}
\orcid{0000-0003-3036-2777}

\begin{abstract}
Equipping query processing systems with provable theoretical
guarantees has been a central focus at the intersection of database
theory and systems in recent years. However, the inevitable divergence
between theoretical abstractions and system assumptions creates a gap
between an algorithm's high-level logical specification and its
low-level physical implementation. Ensuring the correctness of this
logical-to-physical translation is crucial for realizing theoretical
optimality as practical performance gains. Existing database testing
frameworks struggle to address this need because necessary
algorithm-specific inputs such as join trees (as opposed to query
plans) are absent from standard test case generation, and integrating
complex algorithms into these frameworks imposes prohibitive
engineering overhead. Fallback solutions, such as using
macro-benchmark queries, are inherently too noisy for isolating
intricate defects during this translation.

In this experience paper, we present a retrospective analysis of
$\SAAD$, a computer-orchestrated testing framework utilized during the
logical-physical co-design of TreeTracker Join ($\TTJ$), a theoretically
optimal yet practical join algorithm recently published in ACM
TODS. By synthesizing minimal reproducible examples, $\SAAD$
successfully isolates subtle translation defects, such as state
mismanagement and intricate mapping conflicts between join trees and
bushy plans. We demonstrate that this logical-to-physical translation
process is a bidirectional feedback loop: early structural testing not
only hardened $\TTJ$'s physical implementation but also exposed a
critical boundary condition that directly refined the formal
precondition of $\TTJ$ itself. Finally, we detail how confronting
these structural translation challenges drove the architectural
evolution of $\SAAD$ into a robust, structure-aware test generation
pipeline for join-tree-dependent algorithms.
\end{abstract}



\keywords{join algorithm testing, database testing, differential
testing, acyclic conjunctive queries, physical algorithm design}


\maketitle

\section{Introduction}
Query processing remains the cornerstone of any relational database
management system (DBMS). In recent years, a major research thrust has
emerged to equip practical query evaluation with provable theoretical
guarantees. At the heart of this effort is the evaluation of
conjunctive queries (CQs), queries involving multiple joins. An
important subclass of CQs is acyclic conjunctive queries (ACQs) due to
their immense abundance in real-life
workloads~\cite{Luo2026,Fischl2021}. This line of research aims to
bridge the gap between theory and practice in ACQ evaluation by
introducing algorithms that are efficient in both domains (see
survey~\cite{Koutris2026}). Notably, this investigation is no longer
merely an academic exercise; such theoretical advancements for ACQs
have proven their practical viability in production-grade systems,
being successfully deployed in DuckDB~\cite{Qiao2026} and
independently realized within Microsoft SQL Server's optimization
infrastructure~\cite{Zhao2026}. The latter shows a promising
direction: re-evaluating mature query optimization heuristics through
formal lenses, such as ACQ evaluation. A major challenge in designing
practical ACQ evaluation algorithms is that system constraints
commonly diverge from theoretical assumptions. Hence, a gap exists
between the high-level logical specification of an algorithm and its
low-level physical implementation. Ensuring a faithful
logical-to-physical translation is therefore crucial to guarantee that
theoretical optimality actually translates into practical performance
gains.

A natural choice to verify the physical translation of such algorithms
is to leverage conventional database testing
frameworks~\cite{Tang2023,Rigger2020,Zhang2025,Lai2025,Seltenreich,Jung2019,Song2023,Jiang2024,Zhong2020}. However,
incompatibilities exist with the needs of the translation. First,
algorithms rely on an input data structure called a join
tree~\footnote{It is an unfortunate coincidence that the term ``join
tree'' from database theory~\cite{Abiteboul1995} may also refer to a
query plan in DBMS literature. In this paper, we make a distinction
between join tree and query plan.} that is distinct from a query plan
and is not commonly seen in mainstream databases. This fact imposes a
significant challenge to conventional database testing frameworks, as
the correctness of query evaluation depends not only on the query
plans but also on the topology of join trees. Unfortunately, the
existing testing frameworks do not support join trees, and
retrofitting them is highly non-trivial. Second, existing frameworks
generate SQL queries and database instances with the goal of testing
fully-featured DBMSs. Thus, to utilize these tools, researchers would
first have to fully integrate their nascent algorithm into an existing
DBMS, which imposes an unnecessary engineering burden when the primary
goal is simply to verify the correctness of the logical-to-physical
translation. Hence, it is natural for researchers to use
macro-benchmark queries as a proxy to test their physical translation
of an algorithm specification on an experimental prototype that
closely mimics system constraints. They have to manually craft a join
tree for a given query and check the correctness of the translation
through evaluation. The issue with this approach is due to the
complexity of macro-benchmarks. Since the purpose of macro-benchmark
queries is to mimic real-life workloads and benchmark DBMS
performance, those queries, along with the associated database
instance, are inherently complex. When a defect is flagged, the sheer
size and structural noise of the macro-benchmark query prevent
researchers from effectively isolating the root cause.

\begin{figure*}
  \centering
  \includegraphics[width=\textwidth]{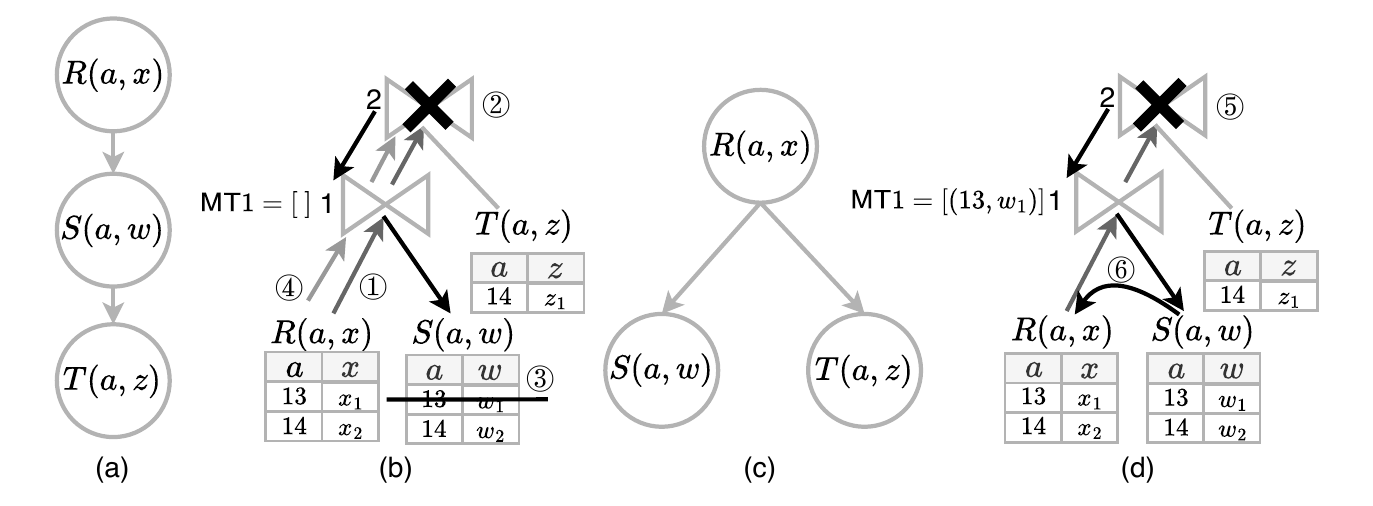}
  \cprotect\caption{Illustration of how the evaluation correctness of
    $\Query = R(a,x) \Join S(a,w) \Join T(a,z)$ depends on the join
    tree structure. Using a defective $\TTJ$ iterator (missing
    Line~\ref{tt-1-beta-b-hashtable-tq:matchingtuples-nil-pass-context}
    in \cref{algo:ttj-join}), $\Query$ evaluates correctly under the
    join tree in (a), as traced in (b). Conversely, under the join
    tree in (c), the identical defective iterator yields an incorrect
    empty result, as traced in (d). We explain the
      details in \cref{sec:motivating-example}. Symbols \circleone
      through \circlesix mark execution snapshots referenced in the
      text.}
  \label{fig:join-trees}
\end{figure*}

In this paper, we detail the Computer-Orchestrated Design of
Algorithms ($\SAAD$)\footnote{In music, a coda is a concluding passage
added to the end of a musical piece, providing a sense of finality,
closure, or resolution. Similarly, we believe the design of any
practical algorithm is truly concluded only when it is rigorously
verified and successfully deployed in practice.} framework utilized
during the design of TreeTracker Join ($\TTJ$), an optimal yet
practical join algorithm recently published in ACM Transactions on
Database Systems (TODS)~\cite{Hu2025}. The term
``computer-orchestrated'' emphasizes the framework's
researcher-centric perspective: instead of black-box bug discovery,
$\SAAD$ explicitly orchestrates test generation based on structural
properties of join trees to assist researchers in verifying the
intricate logical-to-physical translation. $\SAAD$ operates directly
on join trees, verifying the correctness of the physical
implementation by exploiting the interaction between join trees and
query plans. Furthermore, $\SAAD$ proactively exposes defects via
\emph{minimal reproducible examples} (MREs), allowing us to rapidly
identify subtle translation errors easily masked by macro-benchmark
proxies. Notably, the core components of $\SAAD$ are immediately
applicable to a broader class of join-tree-dependent
algorithms~\cite{Koutris2026}.

Importantly, this retrospective analysis reveals that orchestrating
algorithm design creates a bidirectional feedback loop: the defects
exposed by $\SAAD$ did not merely debug $\TTJ$'s physical
implementation; in one instance, they refined the theoretical
precondition of $\TTJ$ itself. Simultaneously, tackling these complex
translation challenges drove the evolution of $\SAAD$ itself,
culminating in a robust, structure-aware testing framework.

\section{From Logical to Physical}
\label{sec:logical-physical}

$\TTJ$ evaluates an ACQ such as $Q = R(a,x) \Join S(a,w) \Join T(a,z)$
for relations $R$, $S$, and $T$ optimally. While its logical
specification is a simple 13-line recursive function relying on
implicit call-stack state (\cref{fig:ttj}), to ensure an
apples-to-apples performance comparison against mature systems like
PostgreSQL, $\TTJ$ must be implemented as a physical iterator within a
standard pipelined execution engine (\cref{algo:ttj-join}, adapted
from~\cite{Hu2024}). This logical-to-physical translation is highly
non-trivial: it demands explicit mutual recursion, state tracking, and
complex backtracking, essentially doubling the length of its recursive
counterpart. The key idea of $\TTJ$ is to leverage a data structure
called a join tree (different from a query plan; we will explain in
\cref{sec:motivating-example}) to avoid redundant computation and
guarantee optimality. Due to space constraints, we defer
step-by-step query evaluation examples of \cref{fig:ttj}
to~\cite{Hu2025}, and the full formal proofs of $\TTJ$ iterator's
properties to~\cite{Hu2024}.

\begin{figure}
\begin{lstlisting}[escapeinside={(*}{*)}]
def ttj(t, plan, i):
  if i > plan.len(): print(t)
  else: 
    R = plan[i]; k = $\pi_{\keys(\text{plan[1..i],R})}(t)$
    P = parent(plan[1..i], R)(*\label{lst:ttj-parent}*)
    if R[k] is None & P is not None:(*\label{lst:ttj-lookup}*)
      return P # backjump to P (*\label{lst:backjump}*)
    for r in R[k]: (*\label{lst:ttj-loop}*)
      result = ttj(t$\concat$r, plan, i+1)(*\label{lst:ttj-recursive}*)
      if result == R: # catch backjump (*\label{lst:ttj-catch}*)
        R[k].delete(r)(*\label{lst:ttj-delete}*)
      elif result is not None:
        return result # continue backjump
\end{lstlisting}
\caption{Logical $\TTJ$ specification from \cite{Hu2025}, featuring
  the simplicity of implicit state management via the call stack.}
\label{fig:ttj}
\end{figure}%

\begin{algorithm}
	\caption{Physical $\TTJ$ implementation featuring explicit stack
          and state management alongside mutual recursion between
          \myverb{getNext()} and \myverb{deleteDT(R)}.}
	\label{algo:ttj-join}
	\DontPrintSemicolon
	
	\KwPurpose{An iterator returns, one at a time, the join result of \Router and \Rinner.}
	\KwOutput{A tuple from \Router $\Join$ \Rinner}
	
	\SetKwFunction{FOpen}{open}
	\SetKwFunction{FGetNext}{getNext}
	\SetKwFunction{FNext}{next}
	\SetKwFunction{FClose}{close}
	\SetKwFunction{FPassContext}{deleteDT}
	\SetKwFunction{FGetROne}{GetOuter}
	\SetKwFunction{FGetRTwo}{GetR2}
	\SetKwFunction{FLookUpH}{LookUpH}
	\SetKwFunction{FReset}{Reset}
	\SetKwFunction{FGet}{get}
	
        \SetKw{is}{is}
        \SetKw{void}{void}
        \SetKw{isnot}{is not}
        \SetKw{and}{and}
	
	\SetKwProg{Fn}{}{}{}
	\SetKwFunction{FOpenPrefix}{\void open}%
	\SetKwFunction{FGetNextPrefix}{Tuple getNext}
	\SetKwFunction{FPassContextPrefix}{Tuple deleteDT}
	\SetKwFunction{FTTJClass}{TTJIterator}
	
	\Fn{\FTTJClass}
	{		
		\Fn{\FOpenPrefix{}}
		{
			Initialize \Row, $\matchingtuples$ to \None
			
			\Rinner.\FOpen{}
                        
			Build hash table \myverb{inner} from \Rinner
                        
			\Router.\FOpen{}
		}
		
		\Fn{\FGetNextPrefix{}}
		{
			\If {$\matchingtuples$ \isnot{} \None \and
                          $\matchingtuples$ \isnot{} \EmptyList} {
				\label{new-ttj-v2:SetOfMatchingTuples-non-nill-non-empty}
				
                                \If {\aMatchingTuple \Equal
                                  $\matchingtuples$.\FNext{} \and
                                  \aMatchingTuple \isnot \None}
                                    {\label{new-ttj-v2:amatchingtuple-when-matchingtuples-non-nil}
                                      \Return
                                      \Row $\concat$ \aMatchingTuple
					\label{new-ttj-v2:return-join-result-when-matchingtuples-non-nil}
				}
				\Row \Equal \Router.\FGetNext{}
				\label{tt-1-beta-b-hashtable-tq:get-next-get-r1}
				
				\lIf{\Row \is \None} {
					\Return \None \label{tt-1-beta-b-hashtable-tq:return-nil-in-matchingtuples-block}
				}
			}
			
			\lIf {\Row \is \None \myverb{||}
                          $\matchingtuples$ \Equal \EmptyList} {
				\Row \Equal \Router.\FGetNext{}
				\label{tt-1-beta-b-hashtable-tq:get-next-get-r1-when-router-nil}
			}
			
			\While {\Row \isnot \None} { 
				\label{tt-1-beta-b-hashtable-tq:loop-start}
				
				\tcp{Find tuples from \Rinner joinable with \Row}
				$\matchingtuples$ \Equal \myverb{inner}.\FGet{\myverb{keys(plan[1..num(this)],Rinner)}}\label{tt-1-beta-b-hashtable-tq:initialize-l-lookUpH}

				\If {$\matchingtuples$ \isnot \None} {
					\aMatchingTuple \Equal $\matchingtuples$.\FNext{}
					\label{tt-1-beta-b-hashtable-tq:set-amatchingtuple}
					
					\Return \Row $\concat$ \aMatchingTuple
					\label{tt-1-beta-b-hashtable-tq:return-join-result}
				} \Else { \tcp{Backjump to
                                    \myverb{parent(Rinner)}}
                                  \Row \Equal \Router.\FPassContext{\myverb{parent(Rinner)}}\label{tt-1-beta-b-hashtable-tq:pass-context-get-r1}
                        } }
			
			\Return \None \label{tt-1-beta-b-hashtable-tq:return-nil-at-the-end-get-next}
		}

		\Fn{\FPassContextPrefix{Relation $R$}} {
                  \If{\Rinner \myverb{==} $R$}
                     { \label{tt-1-beta-b-hashtable-tq:check-parent-child}
                       Remove \aMatchingTuple from $\matchingtuples$
                       and
                       \myverb{inner} \label{tt-1-beta-b-hashtable-tq:remove-tuple-in-l-pass-context}
                     } \Else
                     {\label{tt-1-beta-b-hashtable-tq:check-parent-child-else}
                       \tcp{Has not reached $R$; backjumping
                         continues}
                       $\matchingtuples$ \Equal \None
				\label{tt-1-beta-b-hashtable-tq:matchingtuples-nil-pass-context}
				
				\Row \Equal \Router.\FPassContext{$R$} \label{tt-1-beta-b-hashtable-tq:pass-context-pass-context}
				
				\lIf{\Row \is \None} {
					\Return \None \label{tt-1-beta-b-hashtable-tq:return-nil-pass-context}
				}
			}
			
			\Return \myverb{this}.\FGetNext{}
			\label{new-ttj-v2:get-next-in-pass-context}
		}
	}
\end{algorithm}

\subsection{Preliminaries}
\label{sec:prelim}

A $\TTJ$ \emph{left-deep plan} is a binary tree where internal nodes
are $\TTJ$ iterators (\cref{algo:ttj-join}), and every right child
\Rinner and the left child \Router of the leftmost iterator are
leaves; all other plans are \emph{bushy}. Leaves serve as both nodes
and relations, indexed left-to-right as $\Leaf_1, \dots,
\Leaf_{|\Query|}$, with $|\Query|$ denoting the number of relations
and \verb|plan[1..i]| denoting the prefix sequence $(\Leaf_1, \dots,
\Leaf_i)$. Function \verb|keys(plan[1..i],R)| computes \verb|R|'s
\emph{key}: the attributes it shares with $\Leaf_1, \dots,
\Leaf_{i-1}$. We define \verb|inner| as the hash table built on
\Rinner, and $\concat$ as schema-resolving tuple concatenation. For an
iterator with left child $\Leaf_i$, its label is $i$. Finally,
\myverb{this} self-references the iterator, and \myverb{parent(R)}
identifies \verb|R|'s parent in the join tree. We use $\join$ to
denote a $\TTJ$ iterator instance.

\subsection{A Motivating Example}
\label{sec:motivating-example}

Consider the ACQ $\Query$ defined in \cref{sec:logical-physical}
evaluated over the database instance in \cref{fig:join-trees}, where
$x_i, w_i, z_1$ ($i \in \{1,2\}$) denote distinct constants. $\Query$
is acyclic if it admits a join tree. A \emph{join tree} is a tree
where there is a one-to-one correspondence between nodes in the tree
and relations in the query; two nodes are adjacent if and only if
their corresponding relations share common attribute(s); and the tree
must satisfy the \emph{running intersection property} ($\RIP$): for
any attribute $a$ that appears in $\Query$, all relations containing
$a$ form a connected subtree in the join tree. Once rooted, the
directed join tree dictates that execution must backjump along the
tree edges, rather than the query plan, upon a probe failure. Multiple
join trees can exist for a given query; for example,
\cref{fig:join-trees}~(a) and (c) illustrate two valid join trees for
$\Query$, differing in their parent mappings where \verb|parent(T)| is
$S$ in tree (a), but $R$ in tree (c).

Consider a defective $\TTJ$ iterator missing
Line~\ref{tt-1-beta-b-hashtable-tq:matchingtuples-nil-pass-context},
which is responsible for clearing the local state $\matchingtuples$
during a backjump. Testing this defective iterator under join tree (a)
falsely reports success, while tree (c) correctly reports output
inconsistency, exposing the flaw.

Suppose we evaluate $\Query$ under join tree (a) using the plan from
\cref{fig:join-trees}~(b). Following the pipelined execution, the
iterator $\join_1$ fetches $R(13, x_1)$, successfully probes $S$, and
caches the match $S(13,w_1)$ in $\matchingtuples$ (\circleone). When
the subsequent probe against $T$ fails (\circletwo),
Line~\ref{tt-1-beta-b-hashtable-tq:pass-context-get-r1} from $\join_2$
is invoked, and the execution backjumps to $\join_1$ due to its right
child $S$ being the parent of $T$. This correctly invokes
Line~\ref{tt-1-beta-b-hashtable-tq:remove-tuple-in-l-pass-context},
flushing the stale state in $\matchingtuples$ (\circlethree). The
iterator $\join_1$ then fetches $R(14, x_2)$, probes both hash tables
successfully (\circlefour), and yields the correct output.

Conversely, evaluating $\Query$ under join tree (c) using the plan
from \cref{fig:join-trees}~(d) exposes the defect. Upon the same probe
failure at $T$ (\circlefive), iterator $\join_2$ calls
$\join_1$.\verb|deleteDT(R)| via
Line~\ref{tt-1-beta-b-hashtable-tq:pass-context-get-r1}. Because the
defective iterator lacks
Line~\ref{tt-1-beta-b-hashtable-tq:matchingtuples-nil-pass-context},
it skips clearing the state for the bypassed relation $S$. Instead,
$\join_1$ directly calls \verb|R.deleteDT(R)| (which simply invokes
\verb|getNext()| of $R$) from
Line~\ref{tt-1-beta-b-hashtable-tq:pass-context-pass-context}
(\circlesix). Consequently, $\matchingtuples$ incorrectly retains the
stale tuple $S(13, w_1)$. When $R(14, x_2)$ is fetched, the iterator
$\join_1$ processes this corrupted state, fails its validity check at
Line~\ref{new-ttj-v2:amatchingtuple-when-matchingtuples-non-nil}
(since $S(13,w_1)$ has already been processed), and erroneously
terminates with an empty result via
Line~\ref{tt-1-beta-b-hashtable-tq:return-nil-in-matchingtuples-block}.

Comparing these traces reveals a critical pitfall: a flawed physical
translation can easily evade detection and produce correct results if
the testing benchmark happens to utilize a forgiving topology like the
degenerate tree (a).  Relying solely on macro-benchmark queries leaves
the detection of such state-corruption defects to chance. This
demonstrates the absolute necessity of structure-aware testing by
manipulating join tree topologies directly during the early
translation stage to isolate exact root causes.

\section{$\SAAD$ Framework}
\label{sec:framework}

\begin{figure}[ht]
	\centering
	\includegraphics[width=\columnwidth]{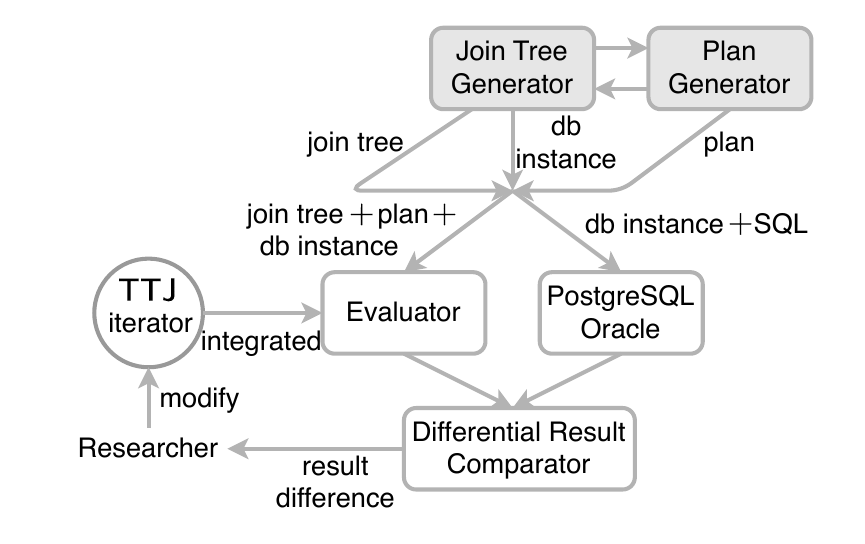}
	\caption{Overview of the $\SAAD$ framework used during the
          design of the $\TTJ$ iterator (\cref{algo:ttj-join}). Test
          case synthesis modules are gray rounded rectangles,
          while differential evaluation modules are white rounded
          rectangles. The $\TTJ$ iterator implementation is depicted
          as a~circle.}
	\label{fig:method}
\end{figure}

\begin{table*}[t]
  \caption{Diagnostic utility comparison between a standard
    macro-benchmark (TPC-H) and the $\SAAD$ framework across critical
    logical-to-physical translation defects. Note: TPC-H
      counts represent queries, whereas $\SAAD$ counts represent
      MREs.}
  \label{tab:bug-detection}
  \centering
  \begin{tabular}{p{4.8cm} p{2cm} p{2cm} p{7.5cm}}
    \toprule \textbf{Defect Class} & \textbf{TPC-H} & \textbf{$\SAAD$
      Suite} & \textbf{Limitation of Macro-Benchmarks} \\ \midrule
    \textbf{1. State Mismanagement} & Detected \newline (4 queries) &
    Isolated \newline (8 MREs) & Requires a specific combination of
    non-degenerate join tree topologies and adversarial data
    distributions to trigger. \\ \midrule \textbf{2. Reverse GYO
      Reduction Order Violation} & Missed \newline (0 queries) &
    Isolated \newline (1 MRE) & Macro-benchmark query plans are
    structurally homogeneous (easily mapped to join trees), failing to
    stress-test $\TTJ$'s theoretical precondition. \\ \midrule
    \textbf{3. $\RIP$ Violation} & Detected \newline (2 queries) &
    Isolated \newline (1 MRE) & Complex bushy plans from
    macro-benchmark queries introduce significant structural noise,
    severely hindering precise defect isolation. \\ \bottomrule
  \end{tabular}
\end{table*}

\begin{figure}
\begin{lstlisting}[escapeinside={(*}{*)}]
def generateRandomTree(n):
  nodes = [Node() for _ in range(n)]
  T = Tree()
  root = random_choice(nodes)
  S = [root]; nodes.remove(root)
  while nodes:                 
    u = random_choice(nodes)
    v = random_choice(S)     
    T.addEdge(v, u); S.append(u); nodes.remove(u)
  return (T, root)

def generate(max_size, max_rel_size, attr_pool):
  T, root = generateRandomTree(random(1, max_size))
  inherited = {root: attr_pool}
  x = random(1, len(attr_pool))
  queue = [root]
  while queue:
    next_level = []
    for node in queue:
      # sample x attrs from inherited pool, enforcing (*$\RIP$*)
      schema = sample_nonempty(inherited[node], x)  
      tuples = gen_tuples(schema, random(1, max_rel_size))
      postgres.ingest(Relation(node.name, schema), tuples)
      for child in T.children(node):
        # child inherits parent schema 
        inherited[child] = schema   (*\label{lst:inherit-child}*)
        next_level.append(child)
    # narrow pool after each BFS level
    x = max(x - random(0, x), 1)    
    queue = next_level
  return T
\end{lstlisting}
\caption{Synthesizing a join tree and a database instance in $\SAAD$ in
  the absence of an input query plan.}
\label{fig:generator}
\end{figure}

\cref{fig:method} illustrates the architecture of $\SAAD$ utilized
during the design of the $\TTJ$ iterator. While implemented for
$\TTJ$, $\SAAD$ is highly modular: its algorithm-agnostic components
readily support other join-tree algorithms by simply replacing the
$\TTJ$ iterator in the figure. $\SAAD$ operates in two tightly coupled
phases: test case synthesis and differential evaluation.

To isolate defects efficiently, $\SAAD$ synthesizes MREs rather than
massive benchmark workloads, capping query size at \MaxQSize = 5
relations and relation size at \MaxRelSize = 10 tuples. Notably, this
small configuration is sufficient to expose the defects we highlight
in \cref{sec:evaluation}. The synthesis pipeline consists of the
join tree generator and the plan generator, either of
which can be used to initiate the synthesis process. The plan
generator synthesizes a collection of relations and applies random
permutations and parenthesizations to construct arbitrary left-deep or
bushy query plans. \cref{fig:generator} illustrates the functionality
of the join tree generator when generating a query expressed directly
as a join tree. The \verb|generateRandomTree| function first builds a
random directed tree. Then, the \verb|generate| function enforces
$\RIP$ by utilizing a top-down breadth-first traversal: each child
node restricts its attribute pool to a random subset of its parent's
schema (Line~\ref{lst:inherit-child}). This structural inheritance
guarantees that the resulting tree inherently satisfies $\RIP$ and
hence is a valid join tree. Relation instances are populated alongside
the join tree construction and subsequently ingested into the
PostgreSQL oracle.

During evaluation, $\SAAD$ translates the ACQ's logical semantics into
a standard \verb|SELECT-FROM-WHERE| SQL query to obtain the ground
truth from PostgreSQL. Enforcing a specific join-tree topology is
unnecessary, as PostgreSQL's internal execution plan is strictly
orthogonal to the logical result. For exact output equivalence,
synthesized relations exclude \verb|NULL|s, and the SQL uses bag
semantics to match $\TTJ$'s multiset output. Simultaneously, $\SAAD$
binds $\TTJ$ iterators to the query plan. A differential comparator
then cross-verifies outputs to flag isolated defects.

Initially, $\SAAD$ fed plans directly to the evaluator, forcing $\TTJ$
to map the plan to a default join tree at runtime. As demonstrated in
\cref{sec:case-study-2,sec:case-study-3}, this direct coupling led to
semantic collapses. This forced $\SAAD$ into its current architecture
(\cref{fig:method}), where the join tree generator acts as a mandatory
validation layer (code omitted for brevity). Furthermore, when the
generator is tasked with constructing a valid join tree from an
arbitrary bushy plan, the structural translation becomes highly
non-trivial. We defer the detailed explanation of this complex
procedure to \cref{sec:case-study-2,sec:case-study-3}, where we
contextualize it alongside the specific logical-to-physical
translation defects it resolved.

\section{Evaluation}
\label{sec:evaluation}

In this section, we demonstrate the critical role of $\SAAD$ in the
physical translation of $\TTJ$ through three postmortem case
studies. These defects fundamentally shaped the theoretical boundaries
of $\TTJ$, its physical iterator implementation, and the architectural
evolution of $\SAAD$ itself. The $\SAAD$ suite is released along with
the source code of the $\TTJ$ query engine (implemented in Java);
comprehensive reproducibility instructions, including hardware
specifications and the PostgreSQL version (v13), can be found in our
previous work~\cite{Hu2025}. To illustrate the limitations of relying
on macro-benchmarks as a testing proxy, and to demonstrate how $\SAAD$
systematically addresses these blind spots, we benchmark our findings
against the same 13 standard TPC-H multi-join queries~\cite{TPC} (Q3,
Q7--Q12, Q14--Q16, Q18--Q20) as we did in~\cite{Hu2025}. Given the
retrospective nature of this paper, we defer a more rigorous
comparison with sophisticated database testing frameworks to future
work.

\cref{tab:bug-detection} summarizes the comparison. TPC-H counts
represent complex, end-to-end benchmark queries, whereas $\SAAD$
isolates defects using MREs. Overall, $\SAAD$ synthesized 28 failing
cases across the entire development lifecycle of $\TTJ$. For this
evaluation, we distilled these into 10 distinct MREs (categorized into
three representative defect classes in \cref{tab:bug-detection}) by
filtering out transient defects associated with early prototypes and
grouping variants of core translation errors by their underlying root
causes. The following subsections detail three representative
MREs. The first defect (\cref{sec:case-study-1}) demonstrates that
exposing intricate state corruptions requires a precise combination of
join tree topology and database instance. While TPC-H occasionally
triggers the defect, there is no clean topological categorization
between queries that mask the issue and those that expose it. This
ambiguity forces researchers to dive into the massive, noisy database
instance to isolate the root cause. In contrast, $\SAAD$ pinpoints the
exact failure using merely three relations and five tuples. The second
defect (\cref{sec:case-study-2}) exposes a deeply hidden theoretical
limitation of $\TTJ$ that TPC-H entirely missed, which ultimately
forced us to refine the formal precondition of $\TTJ$. The third
defect (\cref{sec:case-study-3}) highlights a structural translation
hazard. While TPC-H eventually surfaced the defect using sprawling 6-
or 8-relation bushy plans, $\SAAD$ achieved topological isolation by
reproducing the identical defect using a minimal structure of just
three relations.

\subsection{Case Study 1: State Mismanagement}
\label{sec:case-study-1}

This case study revisits the state mismanagement introduced in
\cref{sec:motivating-example}, caused by the omission of
Line~\ref{tt-1-beta-b-hashtable-tq:matchingtuples-nil-pass-context}
from \cref{algo:ttj-join}. While TPC-H benchmark queries eventually
surface the defect, they offer poor diagnostic utility. The defect was
flagged by only four queries (Q3, Q7, Q8, and Q10), all of which
possess non-degenerate join trees (i.e., at least one internal node
has multiple children, akin to \cref{fig:join-trees}~(c)). The defect
remained completely dormant in the remaining nine
queries. Unsurprisingly, the defective $\TTJ$ iterator masked the
defect across the seven queries featuring degenerate join trees (akin
to \cref{fig:join-trees}~(a)). However, the defect also remained
hidden in Q9 and Q11 despite their non-degenerate topologies. As a
result, there is no clean topological categorization between the TPC-H
queries that mask the defect and those that expose it.

The above observation suggests that topological structure alone is
insufficient to trigger the defect; it requires specific data
distributions. The inconsistent coverage of TPC-H benchmark queries
forces researchers to sift through the massive, noisy database
instance hoping for coincidental data alignment. By contrast, the
value of $\SAAD$ lies not just in detection, but in immediate
isolation: it deterministically synthesizes a minimal MRE consisting
of just three relations and five tuples
(\cref{sec:motivating-example}), bypassing the overhead of
macro-benchmark noise.

\subsection{Case Study 2: Refining the Theoretical Precondition}
\label{sec:case-study-2}

The next two case studies concern the physical translation of the
\verb|parent(Rinner)| logic
(Line~\ref{tt-1-beta-b-hashtable-tq:pass-context-get-r1} of
\cref{algo:ttj-join}). To achieve its theoretical guarantees, the
iterator must dynamically recover a valid join tree from the provided
plan. Let $\Attr{R}$ denote the attribute set of relation $R$. The
recovery procedure computes the key $K$ of \Rinner and assigns
\Rinner's parent to $\Leaf_j$, where $j$ is $\min\SetOf{q \in
  \Set{1,\dots,i-1}}{K \subseteq \Attr{\Leaf_q}}$. This produces the
shallowest possible join tree, maximizing backjumping distance on
probe failures and pruning redundant intermediate results.

Initially, $\SAAD$ sent left-deep plans directly to the evaluator
under the naive assumption that mapping any ACQ left-deep plan to a
join tree was trivial and could thus be performed during the
evaluation phase. This assumption catastrophically failed when $\SAAD$
synthesized a simple left-deep plan $P = R(a,b) \Join S(b,c) \Join
T(a,b,c)$, instantly triggering a \verb|NullPointerException| at
Line~\ref{tt-1-beta-b-hashtable-tq:pass-context-get-r1}. While $P$ is
undeniably an ACQ (admitting a valid join tree rooted at $T$), the
recovery failed because the key of $T$ is $\Set{a,b,c}$, yet neither
$\Attr{R}$ nor $\Attr{S}$ contains this set. This defect directly
refined our initial theoretical conjectures about
$\TTJ$.\footnote{Remarkably, this defect was concurrently identified
by a referee for our ICDT '25 submission.} We originally assumed
$\TTJ$ guaranteed linear-time evaluation on arbitrary left-deep
plans. Confronted with $\SAAD$'s counterexample (and corroborated by
formal review), we restricted $\TTJ$'s optimality
guarantee~\cite{Hu2025} strictly to left-deep plans satisfying the
reverse of a GYO reduction order.

Relying solely on macro-benchmarks would have created the illusion
that optimality applies to arbitrary left-deep plans, leaving this
theoretical flaw permanently undiscovered. A follow-up
study~\cite{Luo2026} proved that across 9,285 acyclic queries in five
popular macro-benchmarks (including TPC-H), every connected left-deep
plan satisfies the reverse of a GYO reduction order. This structural
homogeneity inherently masks the algorithm's true theoretical
boundary.

\subsection{Case Study 3: Architectural Evolution and Topological Isolation}
\label{sec:case-study-3}

Following the discovery of the GYO reduction order constraint
(\cref{sec:case-study-2}), we initially attempted a lightweight patch
by adding an inline validation check within
\verb|parent(Rinner)|. However, this check consistently raised false
alarms when $\SAAD$ injected valid ACQ bushy plans into the evaluator.

To understand the topological mapping failure, consider two bushy
plans derived from the same relations: $P_1 = R(a,b) \Join (T(a,b,c)
\Join S(b,c))$ and $P_2 = T(a,b,c) \Join (R(a,b) \Join S(b,c))$. For
$P_1$, recursively applying the runtime recovery method from
\cref{sec:case-study-2} operates bottom-up: it first builds a subtree
$T \rightarrow S$ for the inner subplan $T \Join S$, and then connects
the outer relation $R$ to the subtree root $T$, correctly yielding the
degenerate join tree $R \rightarrow T \rightarrow S$. However,
applying the identical logic to $P_2$ first builds a subtree $R
\rightarrow S$ for $R \Join S$, and then connects the outer relation
$T$ to the root $R$, yielding a tree $T \rightarrow R \rightarrow
S$. This topology is not a join tree because it violates $\RIP$: $T$
and $S$ share attribute $c$, but are disconnected by $R$.

To maintain a unified execution logic of the evaluator that maps any
query plan to a single valid join tree, we introduce the abstraction
of \emph{virtual relations} to restore $\RIP$ in the constructed tree
for a bushy plan like $P_2$. Instead of directly connecting the outer
relation $T$ with $R$, we abstract the $R \Join S$ subplan as a
virtual relation $V(a,b,c)$ where its attribute set $\Attr{V}$ is the
union of $\Attr{R}$ and $\Attr{S}$. The resulting valid join tree
roots at $T(a,b,c)$, connected to $V(a,b,c)$, which in turn has two
children $R(a,b)$ and $S(b,c)$. This virtual relation abstraction
successfully accommodates all 139 bushy plans derived from our
macro-benchmark study on $\TTJ$~\cite{Hu2025}. (A formal
characterization of when this approach is applicable is established by
Lemma~\ref{lem:bushy} in Appendix~\ref{sec:bushy-plan-convertible}.)

Building on the breakdown of the trivial runtime mapping assumption
(\cref{sec:case-study-2}), the complexity of virtual relations proved
that in-iterator join tree recovery is architecturally
unsustainable. This realization triggered the architectural evolution
of $\SAAD$ to its final state (\cref{fig:method}), entirely offloading
the responsibilities of GYO validation (\cref{sec:case-study-2}) and
join tree construction to $\SAAD$'s synthesis pipeline. While TPC-H Q8
(an 8-relation join) and Q9 (a 6-relation join) eventually triggered
this $\RIP$ violation, their sprawling plans were unnecessarily
complicated, requiring significant effort to trace the structural root
cause. By contrast, $\SAAD$ reproduced the identical defect using a
minimal structure of just three relations.

\section{Related Work}
\label{sec:related-work}

Extensive research applies differential and generator-based testing to
validate database
implementations~\cite{Rigger2020,Seltenreich,Jung2019,Tang2023,Song2023,Jiang2024,Zhong2020},
typically executing synthesized SQL queries against mature
DBMSs. While sharing this differential testing spirit, $\SAAD$
differentiates itself in three ways. (1) Agility for nascent
algorithms: unlike existing tools that assume stable architectures,
$\SAAD$ is lightweight, avoiding the prohibitive overhead of
integrating highly volatile algorithm prototypes (e.g.,~$\TTJ$) into
mature systems. (2) Micro-topological synthesis: instead of operating
at the macroscopic SQL interface and relying on query optimizers whose
heuristics may miss problematic plans and join
trees~\cite{Zhao2025,Hu2025,Luo2026}, $\SAAD$ directly synthesizes
join trees and query plans, allowing it to stress-test both physical
implementations and theoretical preconditions. (3) Topological
isolation: by focusing on join trees, $\SAAD$ implicitly categorizes
multi-join queries based on acyclicity, enabling
minimal-by-construction test cases for precise defect isolation. This
focus on minimality reflects a pragmatic design choice: $\SAAD$ does
not aim to completely eliminate logical-to-physical translation
defects; it simply aims to serve as a first-line filter for those that
are structure-dependent. Hence, unlike earlier work such as
QuickCheck~\cite{Claessen2000}, which generates increasingly larger
test cases, $\SAAD$ heuristically imposes an upper bound on their
size; any defect requiring a test case beyond this bound is left to be
discovered during the formal proof stage.

\section{Concluding Remarks}
\label{sec:conclusion}

Our retrospective analysis of $\SAAD$ demonstrates that bridging
logical specifications and physical implementations is a bidirectional
feedback loop: early structural testing hardened $\TTJ$'s complex
iterator and directly refined its theoretical precondition. This
highlights the danger of relying solely on macro-benchmarks, which
actively mask translation defects. As a valuable early-stage vanguard
to formal verification, $\SAAD$ isolates non-trivial defects via
minimal reproducible examples long before the prohibitive effort of
writing formal proofs. Future extensions include integrating
state-of-the-art join tree enumeration~\cite{Luo2026} to optimize test
construction, bridging $\SAAD$ with mature database testing frameworks
to establish a structure-aware testing harness for other
join-tree-based algorithms in production-grade systems (e.g., adapting
$\SAAD$ for Yannakakis's algorithm~\cite{Yannakakis1981} simply
requires replacing the $\TTJ$ iterator, as core components like the
join tree generator remain algorithm-agnostic), and generalizing this
structure-aware methodology to the early conceptualization phase of
novel algorithms.

\begin{acks}
  The author thanks Daniel Miranker for early guidance on the $\TTJ$
  project, advisor Remy Wang for helpful discussions, and Yixuan Ni
  and Zhiting Zhu for manuscript feedback.
\end{acks}

\bibliographystyle{ACM-Reference-Format}
\bibliography{references}

\appendix

\section{The Existence of a Join Tree for a Bushy Plan of an ACQ}
\label{sec:bushy-plan-convertible}

Since the virtual relation method is not exclusive to $\TTJ$, we first
generalize the $\TTJ$ left-deep plan from \cref{sec:prelim}, allowing
its internal nodes to represent arbitrary join implementations.

\begin{definition}
  A left-deep plan is a binary tree where internal nodes are joins,
  and every right child, along with the left child of the leftmost
  join, is a leaf.
\end{definition}

We assume that for a given plan of a query $\Query$, the leaves of the
plan are labeled from left to right in increasing order $\Leaf_1,
\dots, \Leaf_{|Q|}$. We iterate through the leaves from right to left
and find the first \emph{maximal} left-deep subplan. A left-deep
subplan is maximal if adding any additional internal node and its
associated children from the bushy plan prevents the resulting plan
from being left-deep. We call a plan $P$ \emph{terminal} if the first
maximal left-deep subplan of $P$ is $P$ itself. Thus any left-deep
plan is terminal.

For any left-deep subplan $q$ of $P$, we consider $q$ to be a query
and hence $|q|$ is the number of relations in $q$. Let $\Attr{R}$
produce the attribute set of relation $R$. Abusing a bit of notation,
let $\Attr{P}$ denote the union of attributes of relations that are
leaves of plan $P$ and let $\Attr{T}$ denote the union of attributes
of all relations associated with nodes in a join tree $T$. Let
$\Par{R}{P}$ take in a relation $R$ and a plan $P$ and return the
parent of the node associated with $R$ in plan $P$; let $\Child{R}{P}$
take in a relation $R$ and a plan $P$ and return the right child of
the node associated with $R$ in plan $P$; and let $\Root{P}$ return
the root node of a plan $P$. When we label a leaf with a plan name as
the superscript, the subscript of the leaf indicates the rank of the
leaf within the plan indicated by the superscript name.

\begin{definition}
  Let $T_1 = (X_1, E_1)$ (resp., $T_2 = (X_2, E_2)$) be a tree where
  $X_1$ (resp., $X_2$) is a set of nodes and $E_1$ (resp., $E_2$) is a
  subset of $X_1 \times X_1$ (resp., $X_2 \times X_2$). Moreover, $X_1
  \cap X_2 = \{u\}$ for some node $u$. We define $\Merge{T_1}{T_2}$
  for $T_1$ and $T_2$ as follows. The output of $\Merge{T_1}{T_2}$ is
  a graph $T = (X, E)$ where $X = X_1 \cup X_2$ and $E = E_1 \cup
  E_2$. It is straightforward to see that $T$ is a tree.
\end{definition}

\begin{definition}[the reverse of a GYO reduction order~\cite{Hu2025}]
  \label{def:reverse-gyo}
  Let $P$ be a left-deep plan with leaves $\Leaf_1, \dots,
  \Leaf_{|Q|}$. Plan $P$ satisfies the reverse of a GYO reduction
  order if for every $\Leaf_j$ ($2 \le j \le |Q|$), there exists some
  $\Leaf_i$ ($1 \le i < j$) such that $K \subseteq \Attr{\Leaf_i}$
  where $K$ is equal to $\Attr{\Leaf_j} \cap \big(\bigcup_{r=1}^{j-1}
  \Attr{\Leaf_r})$.
\end{definition}

\begin{definition}[replacement]
  \label{def:replacement}
  We define a sequence of replacements applied towards a plan $P$ as
  follows. Let $P^i$ denote the input to the $i$th replacement ($i \ge
  0$). Let $P^0$ be $P$. Replacement $i$ takes in the plan output of
  replacement $i-1$ and outputs another plan. (Replacement $0$ takes
  in $P^0$.) Consider the $i$th replacement $\Repls{P^i}$. Let $q_i$
  be the first maximal left-deep subplan of $P^i$. Let $V$ denote the
  virtual relation of $q_i$ where $\Attr{V}$ is $\Attr{q_i}$. The
  output plan $P^{i+1}$ is constructed as follows. We set
  $\Child{\Par{\Root{q_i}}{P}}{P}$ to be $V$. Then we discard all the
  nodes that appear in both $q_i$ and $P$ from $P$. The remaining $P$
  is $P^{i+1}$. The replacement sequence stops after $\Repls{P^m}$
  (for some $m > 0$) where $P^m$ is terminal.
\end{definition}

From Definition~\ref{def:replacement}, we have the following
observation.

\begin{proposition}
\label{prop:obs}
Each replacement produces one virtual relation and the virtual
relation is a leaf node that is not the leftmost.
\end{proposition}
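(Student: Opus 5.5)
We argue directly from Definition~\ref{def:replacement}, tracking what a single replacement $\Repls{P^i}$ does to the plan. First, each replacement introduces exactly one new relation $V$. It is placed as $\Child{\Par{\Root{q_i}}{P}}{P}$, and every other node it keeps is an original node of $P^i$, or is discarded. So the count of virtual relations grows by exactly one per replacement.

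Second, $V$ is a leaf. It is installed in the position formerly occupied by the root of $q_i$, and all nodes of $q_i$ are then discarded. Hence nothing below $V$ survives, and $V$ has no children in $P^{i+1}$.

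Third, $V$ is not the leftmost leaf. Here we use the right-to-left scan and maximality of $q_i$. The replacement is only applied when $P^i$ is not terminal, so $q_i \neq P^i$ and $\Root{q_i}$ has a parent $u = \Par{\Root{q_i}}{P}$. We claim $\Root{q_i}$ is the right child of $u$. Suppose instead that $\Root{q_i}$ were the left child of $u$. Then adding $u$ and its right child to $q_i$ gives a larger subplan. If that right child is a leaf, the extension is still left-deep, contradicting maximality. If it is not a leaf, then the scan from right to left would have met a maximal left-deep subplan inside the right child of $u$ before reaching $q_i$, contradicting that $q_i$ is the first one found. This case analysis is the main obstacle, because it needs a precise reading of ``first'' maximal subplan; concretely, the rightmost leaf belongs to the first left-deep block encountered. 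So $\Root{q_i}$ is the right child of $u$, which also justifies the notation $\Child{\cdot}{P}$ in the definition. Since $V$ therefore occupies a right-child slot, it cannot be $\Leaf_1$.

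Finally, we induct on $i$. Earlier virtual leaves are leaves of $P^i$, and they remain right children or are absorbed into later $q_j$, so the statement holds for every replacement in the sequence.
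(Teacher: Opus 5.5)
Your argument is correct. It is more substantive than the paper, which records the proposition only as an observation from Definition~\ref{def:replacement}. Read literally, the definition writes $V$ into the slot $\Child{\Par{\Root{q_i}}{P}}{P}$, i.e.\ as a right child, so $V$ is a leaf and cannot be $\Leaf_1$ by construction. That one-line reading silently assumes that $\Root{q_i}$ is itself a right child; otherwise the assignment would overwrite the sibling subtree rather than $q_i$. Your maximality plus right-to-left-scan argument is exactly the justification of that assumption. Your closing remark is also useful: an earlier virtual leaf stays a right child until it is absorbed into some later $q_r$. That gives what Lemma~\ref{lem:replacement}, Case~2, actually uses, namely index $i>1$ inside $q_r$, not merely inside $P^{i+1}$. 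So the paper's version buys brevity, while yours makes the definition coherent and puts the statement in the form the later lemma needs.

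Three points need tightening.

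(1) Definition~\ref{def:replacement} does apply $\Repls{P^m}$ to the terminal plan $P^m$; its virtual relation serves as $\Choose{q_m}$ in Definition~\ref{def:virtual-relation-method}. There $q_m=P^m$ and $\Root{q_m}$ has no parent, so the claim is vacuous or false. Rather than asserting that replacements are applied only to non-terminal plans, restrict the statement explicitly to $\Repls{P^i}$ with $i<m$. That is all Lemma~\ref{lem:replacement} needs.

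(2) Your gloss that ``the rightmost leaf belongs to the first left-deep block'' is wrong. In $(A \Join (B \Join C)) \Join D$, the leaf $D$ lies in no left-deep subplan containing a join, and the first block is $B \Join C$. If lone leaves counted as blocks, $q_0$ would be $D$, each replacement would swap $D$ for a copy of itself, and the sequence would never terminate. Your case analysis needs two different facts. First, a non-leaf right sibling $y$ of $\Root{q_i}$ contains a join whose two children are leaves (take a deepest internal node of $y$). Second, the maximal left-deep subplan containing that join stays inside $y$, because the subtree at $u$ is not left-deep. All leaves of that subplan lie to the right of those of $q_i$, which contradicts $q_i$ being first.

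(3) ``The count of virtual relations grows by exactly one'' is loose, since earlier virtual leaves can be absorbed into $q_i$ and discarded. What holds is that each replacement creates exactly one new virtual relation.
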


Hence let $\Virt{P^{i+1}}$ denote the lone virtual relation that is
built from $\Repls{P^i}$.

\begin{definition}[nice]
  \label{def:nice}
  A bushy plan $P$ is \emph{nice} if (1) $P$ does not have Cartesian
  products, (2) any left-deep subplan $q$ of $P$ satisfies the reverse
  of a GYO reduction order, and (3) for all $i \ge 0$, let $\Leaf_j$
  denote $\Virt{P^{i+1}}$ for some $j > 1$, there exists $1 \le r < j$
  such that $K \subseteq \Attr{\Leaf_r}$ where $K$ is the key of
  $\Leaf_j$.
\end{definition}

\begin{lemma}
  \label{lem:replacement}
  Suppose a sequence of $m$ replacements applied to a nice bushy plan $P$:
  $\Repls{P^0}, \dots, \Repls{P^m}$. Then for all $0 \le i \le m$, the
  first maximal left-deep subplan $q_i$ of $P^i$ satisfies the reverse
  of a GYO reduction order.
\end{lemma}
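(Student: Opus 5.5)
We argue by induction on $i$, tracking which leaves of $P^i$ are original relations of $P$ and which are virtual relations produced by earlier replacements. The aim is to show that $q_i$ is a left-deep plan whose leaves are original relations except for at most a few virtual ones, and that each virtual leaf enjoys exactly the key-containment property required by Definition~\ref{def:reverse-gyo}.

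\textbf{Base case $i=0$.} Here $P^0 = P$, and $q_0$ is a left-deep subplan of $P$ in the usual sense. Condition (2) of Definition~\ref{def:nice} then gives directly that $q_0$ satisfies the reverse of a GYO reduction order.

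\textbf{Inductive step: structure of $q_{i}$.} For $i \ge 1$, Proposition~\ref{prop:obs} says $\Repls{P^{i-1}}$ replaced the subtree $q_{i-1}$ by a single non-leftmost leaf $V = \Virt{P^i}$ with $\Attr{V} = \Attr{q_{i-1}}$. We would show that the leaves of $q_i$ split into two kinds.
\begin{itemize}
\item Original relations of $P$. These appear in $q_i$ with the same relative left-to-right order as in some left-deep subplan $q$ of $P$.
\item Virtual leaves $V$. Each such leaf stands in for a contiguous block of original leaves, namely the leaves of an earlier $q_{k}$, each of which was itself reverse-GYO by induction.
\end{itemize}
The key structural claim is that removing the virtual leaves from $q_i$ and substituting back their original leaves yields a left-deep chain, so condition (2) applies to the original part. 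To see this we use how maximal left-deep subplans are found: scanning right to left, the first maximal one sits at the right end. After it collapses to a leaf, the parent join now has a leaf as its right child, so the left-deep structure can extend upward.

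\textbf{Checking Definition~\ref{def:reverse-gyo} leaf by leaf.} Let $\Leaf_j$ be a leaf of $q_i$ with $j \ge 2$, and let $K$ be its key with respect to the preceding leaves of $q_i$.
\begin{itemize}
\item If $\Leaf_j$ is virtual, condition (3) of Definition~\ref{def:nice} supplies $r<j$ with $K \subseteq \Attr{\Leaf_r}$.
\item If $\Leaf_j$ is original, we need some earlier leaf whose attributes contain $K$. The preceding leaves of $q_i$ are either original leaves of the corresponding chain in $P$, or virtual leaves whose attribute sets are unions of original leaves. By condition (2) applied to the corresponding left-deep subplan of $P$, there is an earlier original leaf covering the key computed in $P$.
\end{itemize}

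\textbf{Main obstacle.} The original-leaf case is where the argument is weakest. The key computed in $q_i$ may differ from the key computed in $P$: a virtual predecessor carries the union $\Attr{q_k}$ and so can enlarge the key. The plan is to bound the enlarged key by noting $K \subseteq \Attr{\Leaf_j}$. Condition (1), the absence of Cartesian products, together with the reverse-GYO property of the absorbed $q_k$, should then let us either place $K$ inside the virtual leaf itself, since $V$ contains all attributes of its block, or inside an original leaf. Specifically, if any attribute of $K$ comes from the virtual block, we try to show that all of $K$ lies in $\Attr{V}$ and take $r$ to be the position of $V$. Otherwise $K$ equals the original key and condition (2) finishes.

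The case analysis on whether a virtual predecessor exists is the delicate part, and may require strengthening the induction hypothesis. The strengthened hypothesis would state that every virtual leaf's attribute set contains all attributes shared between its block and the leaves to its left.
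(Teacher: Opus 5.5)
\textbf{There is a genuine gap, and it cannot be closed.} Your base case and your handling of virtual leaves coincide with the paper's proof: condition~(2) of Definition~\ref{def:nice} for $q_0$, and condition~(3) for virtual leaves. The gap is the original-leaf case, which you leave open. Your suspicion there is well founded.

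The paper closes this case in one line. It takes the leaf $\Leaf_{i'}^P$ of $P$ corresponding to $\Leaf_i^{q_r}$. It uses condition~(2) to find an earlier leaf $\Leaf_{j'}^P$, in the same left-deep subplan of $P$, whose attributes contain the key of $\Leaf_{i'}^P$. It then asserts that $\Attr{\Leaf_{j'}^P}$ contains the key of $\Leaf_i^{q_r}$. That tacitly assumes two things:
(a) every original leaf of $q_r$ lies in a nontrivial left-deep subplan of $P$ together with its original predecessors;
(b) its key in $q_r$ equals its key there, i.e.\ virtual predecessors never enlarge it.
Neither follows from niceness. In fact the statement fails under the given definitions.

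Concretely, take the acyclic query $A(x,y) \Join B(y,z) \Join C(z) \Join D(x,y,z)$ and the plan $P = (A \Join (B \Join C)) \Join D$.
First, $P$ has no Cartesian products.
Second, the only nontrivial left-deep subplan of $P$ is $B \Join C$. It satisfies the reverse of a GYO reduction order, since the key $\Set{z}$ of $C$ lies in $\Attr{B}$.
Third, replacing $q_0 = B \Join C$ by $V$ with $\Attr{V} = \Set{y,z}$ gives the terminal plan $P^1 = (A \Join V) \Join D$. There the key $\Set{y}$ of $V$ lies in $\Attr{A}$.
So $P$ is nice. Yet in $q_1 = P^1$ the key of $D$ is $\Set{x,y,z}$. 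Relative to $A$ alone it would be $\Set{x,y}$; the virtual predecessor adds $z$. This key is contained in neither $\Attr{A} = \Set{x,y}$ nor $\Attr{V} = \Set{y,z}$.

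Every step of your plan breaks on this example:
\begin{itemize}
\item Substituting $B \Join C$ back for $V$ yields a bushy plan. The flattened chain $A, B, C, D$ is neither a subplan of $P$ nor reverse-GYO. So condition~(2) says nothing about $D$.
\item The key meets the virtual block (via $z$) but is not inside $\Attr{V}$ (because of $x$).
\item Your strengthened hypothesis is automatic, since $\Attr{V}$ is by definition the union of its block, so it adds no information.
\end{itemize}

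A correct version needs a stronger notion of niceness. For example, require the covering property of condition~(3) for every non-first leaf of every $q_i$; this makes the lemma immediate. This does not endanger Lemma~\ref{lem:bushy}: each $T^{q_i}$ is rooted at $\Choose{q_i}$ and is a join tree by Lemmas~\ref{lem:bushy-helper} and~\ref{lem:virtual-gyo} alone.
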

\begin{proof}
  Let $0 \le r \le m$ and let $G(r)$ denote the predicate ``the first maximal
  left-deep subplan of $P^r$ satisfies the reverse of a GYO reduction
  order.'' We prove by induction that $G(r)$ holds for all $r =
  0,\dots,m$.

  Base case: By the left-deep plan definition and
  Definition~\ref{def:nice}~(2), the first maximal left-deep subplan
  $q_0$ of $P^0$ satisfies the reverse of a GYO reduction order.

  Induction step: Let $r$ belong to $\{1,\dots,m\}$ and assume that
  $G(r')$ holds for all $r'$ in $\{0,\dots,r-1\}$. We need to prove
  that $G(r)$ holds. Let $P^r$ be the output of
  $\Repls{P^{r-1}}$. Consider the first maximal left-deep subplan
  $q_r$ of $P^r$ and a leaf $\Leaf_i^{q_r}$. Since $i=1$ holds
  vacuously under Definition~\ref{def:reverse-gyo}, we consider two
  cases where $i > 1$.

  Case~1: $\Leaf_i^{q_r}$ is a non-virtual relation. Leaf node
  $\Leaf_i^{q_r}$ is non-virtual in $P$. Suppose $\Leaf_i^{q_r}$
  corresponds to $\Leaf_{i'}^{P}$ in $P$. By
  Definition~\ref{def:nice}~(2) and Definition~\ref{def:reverse-gyo},
  there exists a leaf $\Leaf_{j'}^P$ for some $1 \le j' < i'$ such
  that the key of $\Leaf_{i'}^P$ is a subset of $\Attr{\Leaf_{j'}^P}$
  and $\Leaf_{i'}^P$ belongs to the same left-deep subplan as
  $\Leaf_{j'}^P$. Since $\Leaf_{i'}^{P}$ is not in $q_{r-1}$, and $1
  \le j' < i'$, $\Leaf_{j'}^P$ is not in $q_{r-1}$. Furthermore, since
  all the nodes in the first maximal left-deep subplan of $P^{r'}$ for
  all $r'$ in $\{0,\dots, r-1\}$ are discarded per
  Definition~\ref{def:replacement}, $\Leaf_{j'}^P$ is not in
  $q_{r'}$. Thus $\Leaf_{j'}^P$ corresponds to $\Leaf_j^{q_r}$ for
  some $1 \le j < i$. Hence $\Attr{\Leaf_j^{q_r}} =
  \Attr{\Leaf_{j'}^P}$, which contains the key of $\Leaf_i^{q_r}$.

  Case~2: $\Leaf_i^{q_r}$ is a virtual relation. By
  Proposition~\ref{prop:obs}, we have $i > 1$. By
  Definition~\ref{def:nice}~(3), there exists a preceding relation
  $\Leaf_j^{q_r}$ ($1 \le j < i$) whose attributes contain the key of
  $\Leaf_i^{q_r}$.

  Combining the two cases, we conclude that $G(r)$ holds, as required.
\end{proof}

\begin{lemma}
  \label{lem:bushy-helper}
  If a left-deep plan $P$ satisfies the reverse of a GYO reduction
  order, then $P$ corresponds to a join tree with root $\Leaf_1^P$.
\end{lemma}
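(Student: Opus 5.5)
We will build the join tree incrementally by induction on the number of leaves, attaching each new leaf to the witness given by Definition~\ref{def:reverse-gyo}. For $j = 1,\dots,|Q|$, let $P_j$ denote the prefix $(\Leaf_1^P,\dots,\Leaf_j^P)$, viewed as a query. We claim that there is a tree $T_j$ on exactly these $j$ relations, rooted at $\Leaf_1^P$, that satisfies $\RIP$ for the attributes of $P_j$. The base case $T_1$ is the single node $\Leaf_1^P$. For the step from $j-1$ to $j$, set $K = \Attr{\Leaf_j^P} \cap \bigcup_{r<j}\Attr{\Leaf_r^P}$, pick the smallest $i<j$ with $K \subseteq \Attr{\Leaf_i^P}$, which exists by the hypothesis, and obtain $T_j$ from $T_{j-1}$ by adding the edge $(\Leaf_i^P, \Leaf_j^P)$. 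Equivalently, $T_j$ is $\Merge{T_{j-1}}{T'}$, where $T'$ is the single-edge tree on $\{\Leaf_i^P,\Leaf_j^P\}$. The two trees share exactly the node $\Leaf_i^P$, so the result is a tree, and its root stays $\Leaf_1^P$.

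The main work is checking $\RIP$ for $T_j$. Take any attribute $a$. If $a \notin \Attr{\Leaf_j^P}$, the set of nodes containing $a$ is the same as in $T_{j-1}$, so it is connected by the induction hypothesis. If $a \in \Attr{\Leaf_j^P}$ but $a$ appears in no earlier leaf, then the only node containing $a$ is the singleton $\Leaf_j^P$, which is trivially connected. Otherwise $a \in K \subseteq \Attr{\Leaf_i^P}$. In that case the nodes containing $a$ in $T_{j-1}$ form a connected subtree that includes $\Leaf_i^P$, and adding the leaf $\Leaf_j^P$ adjacent to $\Leaf_i^P$ keeps it connected. So $T_{|Q|}$ satisfies $\RIP$.

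The step I expect to be the main obstacle is reconciling $T_{|Q|}$ with the paper's definition of a join tree, which also requires that two nodes be adjacent \emph{iff} they share attributes. Read literally, the ``if'' direction fails: in a tree, not every pair of relations that share an attribute can be adjacent. I would therefore argue, as is standard, that the intended meaning is a tree over the relations with $\RIP$. Under this reading $\RIP$ is the essential property, and the edges we added are genuine joins. If the plan has no Cartesian products, $K$ is nonempty, so each new edge joins two relations that share attributes. This would be added as a remark.

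Finally, we note that $T_{|Q|}$ is exactly the tree produced by the recovery procedure of \cref{sec:case-study-2}, which makes $\Leaf_j^P$'s parent the leaf $\Leaf_i^P$ with the minimal such $i$. That is the sense in which $P$ ``corresponds'' to this join tree.
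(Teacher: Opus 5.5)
Your proof is correct, but it takes a different route from the paper's. The paper gives no argument of its own and simply declares the lemma ``immediate from Lemma 4.6 of \cite{Hu2025}''. You instead prove it from scratch: you attach each $\Leaf_j^P$ to its reverse-GYO witness and check $\RIP$ by a three-way split on whether $a \in \Attr{\Leaf_j^P}$ and whether $a$ occurs in an earlier leaf.

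The citation buys brevity and inherits whatever conventions the TODS paper fixes. Your version is self-contained, and it makes explicit two things the appendix quietly relies on.

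\begin{itemize}
\item Your $\RIP$ check works for \emph{any} witness $i$, and taking the minimal one makes the tree coincide with the recovery rule of \cref{sec:case-study-2}. This pins down what ``corresponds to'' means in exactly the form Definition~\ref{def:virtual-relation-method} uses when it builds $T^{q_i}$ ``via the method described in \cref{sec:case-study-2}''.
\item Your remarks expose definitional slack that the citation hides.
  \begin{itemize}
  \item The ``if'' half of the adjacency clause in \cref{sec:motivating-example} cannot hold in general. Neither tree in \cref{fig:join-trees} satisfies it, since all three relations share $a$. So $\RIP$ must be the operative condition.
  \item The ``only if'' half needs every key $K$ to be nonempty. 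The lemma's hypothesis alone does not guarantee this: for $R(a) \Join S(b)$ the reverse-GYO condition holds vacuously. Definition~\ref{def:nice}~(1) does supply it wherever the lemma is applied.
  \end{itemize}
\end{itemize}

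Your direct case analysis also cannot be replaced by the paper's Lemma~\ref{lem:merge}. That lemma requires the shared node to carry all attributes of one of the two trees, while you only know $K \subseteq \Attr{\Leaf_i^P}$.
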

\begin{proof}
  Immediate from Lemma 4.6 of \cite{Hu2025}.
\end{proof}

\begin{lemma}
  \label{lem:virtual-gyo}
  Let $P$ be a left-deep plan with leaves $\Leaf_1, \dots,
  \Leaf_{|P|}$ and let $V$ be a virtual relation where $\Attr{V} =
  \Attr{P}$. Then the sequence $S = \{V, \Leaf_1, \dots,
  \Leaf_{|P|}\}$ satisfies the reverse of a GYO reduction order.
\end{lemma}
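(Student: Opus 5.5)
We verify Definition~\ref{def:reverse-gyo} directly on the sequence $S$. Index $S$ as $\Leaf'_1 = V$ and $\Leaf'_{k+1} = \Leaf_k$ for $1 \le k \le |P|$. Definition~\ref{def:reverse-gyo} only requires that for every position $j \ge 2$ some earlier element $\Leaf'_i$ ($i < j$) contains the key $K_j = \Attr{\Leaf'_j} \cap \bigcup_{r<j} \Attr{\Leaf'_r}$. The plan is to take $i = 1$, i.e., the virtual relation $V$, as the witness for every $j$.

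The key observation is that every leaf $\Leaf_k$ of $P$ contributes its attributes to $\Attr{P}$, so $\Attr{\Leaf_k} \subseteq \Attr{P} = \Attr{V}$. Hence, for any $j \ge 2$ with $\Leaf'_j = \Leaf_{j-1}$,
\[
K_j \subseteq \Attr{\Leaf'_j} = \Attr{\Leaf_{j-1}} \subseteq \Attr{V} = \Attr{\Leaf'_1},
\]
and since $1 < j$, the condition of Definition~\ref{def:reverse-gyo} holds at position $j$. The position $j = 1$ imposes no requirement, so $S$ satisfies the reverse of a GYO reduction order.

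There is no real obstacle; the only care needed is notational. Definition~\ref{def:reverse-gyo} is phrased for left-deep plans, so we read $S$ as the left-deep plan whose leaves in order are $V, \Leaf_1, \dots, \Leaf_{|P|}$; the definition depends only on this leaf order, so the argument applies verbatim. Notably, the argument does not even need $P$ itself to satisfy the reverse GYO order. Combined with Lemma~\ref{lem:bushy-helper}, this yields a join tree rooted at $V$, which is presumably how the lemma is used later.
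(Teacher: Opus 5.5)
Your proof is correct and is essentially the paper's argument: both take $V$ as the witness at every position, since each leaf's key lies inside $\Attr{\Leaf_i} \subseteq \Attr{P} = \Attr{V}$. The only difference is that the paper computes the key exactly as $\Attr{\Leaf_i} \cap \Attr{V}$, whereas you simply bound it by $\Attr{\Leaf_i}$.
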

\begin{proof}
  Since $\Attr{V}$ is $\Attr{P}$, the key of $\Leaf_1$ is a subset of
  $\Attr{V}$. Let $1 < i \le |P|$. Since $\Attr{V}$ is $\Attr{P}$, the
  key of $\Leaf_i$ in $S$ is $\Attr{\Leaf_i} \cap \big(\Attr{V} \cup
  \bigcup_{r=1}^{i-1} \Attr{\Leaf_r}) = \Attr{\Leaf_i} \cap \Attr{V}$,
  which is contained in $\Attr{V}$. Since $i$ is chosen arbitrarily,
  we conclude that $S$ satisfies the reverse of a GYO reduction order,
  as required.
\end{proof}

\begin{definition}[virtual relation method]
  \label{def:virtual-relation-method}
  Let $P$ be a nice bushy plan. We enhance a sequence of replacements
  in Definition~\ref{def:replacement} to construct a tree as follows.
  Let $P^i$ denote the input to the $i$th replacement ($i \ge 0$). We
  extend $\Repls{P^i}$ with a tree $T_i$ as an additional input. The
  function produces a tree $T_{i+1}$ as an additional output. Thus we
  define a new function $\Repl{P^i}{T_i}$, which produces a tuple of
  two elements $P^{i+1}$ and $T_{i+1}$. Let $T_0$ be an empty set.

  Consider the $i$th replacement $\Repl{P^i}{T_i}$. Let $q_i$ be the
  first maximal left-deep subplan of $P^i$. Let $V$ denote the virtual
  relation of $q_i$ where $\Attr{V}$ is $\Attr{q_i}$. The output plan
  $P^{i+1}$ is constructed the same as
  Definition~\ref{def:replacement}. Tree $T_{i+1}$ is constructed as
  follows. Let $q_i$ be the first maximal left-deep subplan of
  $P^i$. We define a function $\Choose{q_i} = \Virt{P^{i+1}}$. By
  Lemmas~\ref{lem:bushy-helper} and~\ref{lem:virtual-gyo}, we
  construct a join tree $T^{q_i}$ with root $\Choose{q_i}$ via the
  method described in \cref{sec:case-study-2}. Then $T_{i+1}$ is
  $\Merge{T^{q_i}}{T_i}$.
\end{definition}

\begin{proposition}
  \label{prop:attrs}
  Let plan $P^i$ and tree $T_i$ denote the output of
  $\Repl{P^{i-1}}{T_{i-1}}$ for some $i \ge 1$. Let $q_{i-1}$ denote
  the first maximal left-deep subplan of $P^{i-1}$. Then
  $\Attr{\Virt{P^i}}$ is equal to $\Attr{q_{i-1}}$, which is equal to
  $\Attr{T_i}$.
\end{proposition}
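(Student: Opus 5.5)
The first equality holds by construction, and the second I would prove by induction on $i$. In Definition~\ref{def:replacement} (and hence in Definition~\ref{def:virtual-relation-method}), the virtual relation $V=\Virt{P^i}$ produced by $\Repl{P^{i-1}}{T_{i-1}}$ is defined with $\Attr{V}=\Attr{q_{i-1}}$. So $\Attr{\Virt{P^i}}=\Attr{q_{i-1}}$ needs no argument. The real content is $\Attr{T_i}=\Attr{q_{i-1}}$. Alongside it I would carry a structural invariant:
\begin{center}
(I) for $i\ge 2$, $\Virt{P^{i-1}}$ is a leaf of $q_{i-1}$.
\end{center}

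\textbf{Attributes of the new tree.} By Definition~\ref{def:virtual-relation-method}, $T_i=\Merge{T^{q_{i-1}}}{T_{i-1}}$. By Lemmas~\ref{lem:bushy-helper} and~\ref{lem:virtual-gyo}, $T^{q_{i-1}}$ is a join tree whose node set is exactly the leaves of $q_{i-1}$ together with the root $\Choose{q_{i-1}}=\Virt{P^i}$. Hence
\[
\Attr{T^{q_{i-1}}} = \Attr{q_{i-1}}\cup\Attr{\Virt{P^i}} = \Attr{q_{i-1}},
\]
using the first equality. Since $\Merge{\cdot}{\cdot}$ takes the union of node sets,
\[
\Attr{T_i}=\Attr{T^{q_{i-1}}}\cup\Attr{T_{i-1}}.
\]
It therefore suffices to show $\Attr{T_{i-1}}\subseteq\Attr{q_{i-1}}$.

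\textbf{Base case $i=1$.} Here $T_0=\emptyset$, so $\Attr{T_1}=\Attr{q_0}$ immediately.

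\textbf{Inductive step $i\ge 2$.} The induction hypothesis gives $\Attr{T_{i-1}}=\Attr{q_{i-2}}=\Attr{\Virt{P^{i-1}}}$. By (I), $\Virt{P^{i-1}}$ is a leaf of $q_{i-1}$, so $\Attr{\Virt{P^{i-1}}}\subseteq\Attr{q_{i-1}}$, which closes the induction. The same fact also confirms the precondition of $\Merge{\cdot}{\cdot}$: the two trees share exactly the node $\Virt{P^{i-1}}$. It lies in $T_{i-1}$ as the root of $T^{q_{i-2}}$ and in $T^{q_{i-1}}$ as a leaf of $q_{i-1}$. Every other node of $T_{i-1}$ was discarded from the plan and so cannot reappear in $q_{i-1}$.

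\textbf{Main obstacle: proving (I).} The replacement sets $\Child{\Par{\Root{q_{i-2}}}{P}}{P}$ to $V$, so $V$ becomes the right child of the join node that was the parent of $q_{i-2}$'s root. I would argue as follows.
\begin{itemize}
\item $q_{i-2}$ was the first maximal left-deep subplan found when scanning leaves from right to left. Hence every leaf to the right of it in $P^{i-2}$ belongs to no left-deep structure that has not already been absorbed.
\item After the replacement, $V$ is the rightmost relevant leaf, by Proposition~\ref{prop:obs} and the right-to-left scan.
\item The parent join node with right child $V$, together with its left spine, forms a left-deep subplan containing $V$.
\item Therefore the first maximal left-deep subplan of $P^{i-1}$, found scanning right to left, starts at or contains $V$.
\end{itemize}
If the scan order makes this delicate, I would fall back on a weaker statement that still suffices: every virtual leaf appearing in $P^{i-1}$ ends up in some later $q$. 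Combined with a union-over-all-steps formulation, this gives $\Attr{T_{i-1}}\subseteq\Attr{q_{i-1}}\cup\Attr{\Virt{P^{i-1}}}$.
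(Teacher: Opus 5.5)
Your reduction is sound and spells out what the paper compresses into ``immediate from Definition~\ref{def:virtual-relation-method}''. The first equality holds by construction, and $\Attr{T^{q_{i-1}}}=\Attr{q_{i-1}}$, so the second equality amounts to $\Attr{T_{i-1}}\subseteq\Attr{q_{i-1}}$; that inclusion does follow by induction once you have (I). The gap is your proof of (I). Your third bullet is false in general: the join node $p$ whose right child is the new virtual leaf $V$ heads a left-deep subplan only if the left subtree of $p$ is itself left-deep, and the scan does not guarantee this. Take $P=((A\Join B)\Join(C\Join D))\Join(E\Join F)$ with $A(x)$, $B(x)$, $C(x,y)$, $D(y)$, $E(x,z)$, $F(z)$; this is a nice plan of an acyclic query. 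The scan gives $q_0=E\Join F$ and $P^1=((A\Join B)\Join(C\Join D))\Join V_1$. Now $V_1$ is the rightmost leaf, yet no nontrivial left-deep subplan contains it, so $q_1=C\Join D$ and (I) fails at $i=2$. Counting singletons as left-deep subplans does not help: the scan would return $V_1$ itself and the replacements would never terminate. In this example $T^{q_1}$ and $T_1$ share no node, and $\Attr{T_1}=\{x,z\}\not\subseteq\{x,y\}=\Attr{q_1}$. So no sharpening of the scan-order argument can establish (I). Your fallback does not repair this: $\Attr{T_{i-1}}\subseteq\Attr{q_{i-1}}\cup\Attr{\Virt{P^{i-1}}}$ is trivial from the induction hypothesis, and it is not the inclusion the inductive step needs.

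The paper's one-line proof works only because of a hypothesis built into Definition~\ref{def:virtual-relation-method}. Writing $T_i=\Merge{T^{q_{i-1}}}{T_{i-1}}$ presupposes that the two trees share exactly one node, and the proof of Lemma~\ref{lem:bushy} states that this node is $\Virt{P^{i-1}}$. Granting this, (I) is immediate and has nothing to do with the scan. The shared node lies in $T^{q_{i-1}}$ and differs from its fresh root $\Virt{P^i}$, so it must be a leaf of $q_{i-1}$. Your induction then closes exactly as you wrote it. You should present (I) as this inherited well-definedness assumption: each virtual leaf is absorbed by the very next first maximal left-deep subplan. Do not present it as a consequence of the right-to-left scan. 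The example above shows that the assumption genuinely restricts which plans the construction, and hence the proposition, applies to.
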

\begin{proof}
  Immediate from Definition~\ref{def:virtual-relation-method}.
\end{proof}

We remark that the virtual relation method does not actually modify
$P$. Since the virtual relation method enhances
Definition~\ref{def:replacement} with tree construction, the
following lemma is immediate from Lemma~\ref{lem:replacement}.

\begin{lemma}
  \label{lem:replacement2}
  Suppose a sequence of $m$ replacements applied to a nice bushy plan $P$:
  $\Repl{P^0}{T_0}, \dots, \Repl{P^m}{T_m}$. Then for all $0 \le i \le m$, the
  first maximal left-deep subplan $q_i$ of $P^i$ satisfies the reverse
  of a GYO reduction order.
\end{lemma}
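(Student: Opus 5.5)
The plan is to reduce Lemma~\ref{lem:replacement2} to Lemma~\ref{lem:replacement}. The point is that the enhanced function $\Repl{P^i}{T_i}$ in Definition~\ref{def:virtual-relation-method} produces exactly the same plan component $P^{i+1}$ as $\Repls{P^i}$ in Definition~\ref{def:replacement}. The tree $T_i$ is an auxiliary output only, and it never feeds back into how $P^{i+1}$ is built.

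\textbf{Step 1 (same plan sequence).} By induction on $i$, show that the plan sequence $P^0,\dots,P^m$ generated by the $\Repl{\cdot}{\cdot}$ calls coincides with the one generated by the $\Repls{\cdot}$ calls.
\begin{itemize}
\item Base case: both sequences start from $P^0 = P$.
\item Inductive step: if the $i$th plans agree, then both procedures select the same first maximal left-deep subplan $q_i$ of $P^i$. They create the same virtual relation $V$ with $\Attr{V}=\Attr{q_i}$, set the same right child to $V$, and discard the same nodes. Definition~\ref{def:virtual-relation-method} states explicitly that $P^{i+1}$ is ``constructed the same as Definition~\ref{def:replacement}''.
\end{itemize}
Because the termination condition depends only on whether $P^m$ is terminal, the two sequences also stop after the same number $m$ of replacements.

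\textbf{Step 2 (apply the earlier lemma).} Since the plans $P^i$, and hence the subplans $q_i$, are identical in both settings, and $P$ is nice by hypothesis, Lemma~\ref{lem:replacement} directly gives that every $q_i$, for $0\le i\le m$, satisfies the reverse of a GYO reduction order.

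\textbf{Main obstacle.} The only delicate point is making sure the construction of $T_{i+1}$ does not implicitly alter $P^{i+1}$. One might worry that $\Choose{q_i}$ or the merge step $\Merge{T^{q_i}}{T_i}$ changes something in the plan. However, $\Choose{q_i}=\Virt{P^{i+1}}$ only reads the already-created virtual relation, and $\Merge{\cdot}{\cdot}$ acts only on trees. I would note this explicitly, together with the paper's remark that the virtual relation method does not modify $P$. With that settled, the lemma follows immediately.
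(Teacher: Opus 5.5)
Your proposal is correct and takes the same approach as the paper. The paper calls the lemma immediate from Lemma~\ref{lem:replacement}, because the virtual relation method only adds tree construction to Definition~\ref{def:replacement}; your induction showing the plan sequences $P^0,\dots,P^m$ coincide, and your check that $\Choose{q_i}$ and $\Merge{\cdot}{\cdot}$ never modify the plan, simply spell out that observation.
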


For a tree $T$ where each node corresponds to a relation, we identify
each node with its associated relation. For example, a relation is in
$T$ if there exists a node in $T$ that is associated with the
relation. We say an attribute $a$ appears in $T$ if there exists some
relation $R$ in $T$ such that $a \in \Attr{R}$. We say a relation $R$
contains attribute $a$ if $a \in \Attr{R}$.

\begin{lemma}
  \label{lem:merge}
  Let $T_1 = (X_1, E_1)$ (resp., $T_2 = (X_2, E_2)$) be a join tree
  where $X_1$ (resp., $X_2$) is a set of nodes and $E_1$ (resp.,
  $E_2$) is a subset of $X_1 \times X_1$ (resp., $X_2 \times
  X_2$). Moreover, $X_1 \cap X_2 = \{u\}$ for some node $u$ where the
  associated relation $R$ has $\Attr{R} = \Attr{T_2}$. Let $b$ be an
  attribute that appears in both $T_1$ and $T_2$. Relations in
  $\Merge{T_1}{T_2}$ that contain $b$ form a connected subtree.
\end{lemma}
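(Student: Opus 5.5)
Let $S_1$ (resp., $S_2$) denote the set of nodes of $T_1$ (resp., $T_2$) whose relations contain $b$. The plan is to show that $S_1$ and $S_2$ are each connected in their own trees, that both contain the shared node $u$, and that their union is therefore connected in $\Merge{T_1}{T_2}$.

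First, since $T_1$ and $T_2$ are join trees, $\RIP$ holds in each. So $S_1$ induces a connected subtree of $T_1$, and $S_2$ induces a connected subtree of $T_2$. Second, $b$ appears in $T_2$ and $\Attr{R} = \Attr{T_2}$, where $R$ is the relation at $u$. Hence $b \in \Attr{R}$, so $u \in S_2$. Since $u \in X_1$, we also get $u \in S_1$.

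Next we need that the set of nodes in $\Merge{T_1}{T_2}$ containing $b$ is exactly $S_1 \cup S_2$. This holds because $X = X_1 \cup X_2$ and each node keeps its associated relation. (The intersection $X_1 \cap X_2$ is only $\{u\}$, so no node is assigned two different relations.) The edges of the merged tree are $E_1 \cup E_2$. Therefore the subgraph induced by $S_1 \cup S_2$ contains the induced subtree on $S_1$ (its edges lie in $E_1$) and the induced subtree on $S_2$ (its edges lie in $E_2$). Both are connected and share the vertex $u$, so their union is connected. Acyclicity is inherited because $\Merge{T_1}{T_2}$ is a tree, so any connected induced subgraph of it is a subtree.

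The only delicate step is establishing $u \in S_1$ and $u \in S_2$ simultaneously. This relies entirely on the hypothesis $\Attr{R} = \Attr{T_2}$. Without it, $S_1$ and $S_2$ might be disjoint and each avoid $u$, and then the union would be disconnected. Note that the hypothesis that $b$ appears in $T_1$ is not even needed for connectivity once $u \in S_1$ is established. It only guarantees that $S_1$ is the relevant nonempty set; in fact $u \in S_1$ already makes it nonempty.
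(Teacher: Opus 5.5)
Your proof is correct and follows essentially the same route as the paper: both arguments hinge on deducing $b \in \Attr{R}$ from $\Attr{R} = \Attr{T_2}$, so that the $b$-containing parts of $T_1$ and $T_2$ each contain the shared node $u$ and glue together there. The paper phrases this as a path argument: for $b$-containing relations $S$ in $T_1$ and $W$ in $T_2$, the path from $S$ through $R$ to $W$ consists only of $b$-containing relations. You phrase it instead as a union of two connected vertex sets sharing $u$, which also explicitly covers the same-tree pairs that the paper leaves implicit.
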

\begin{proof}
  Let $T$ be $\Merge{T_1}{T_2}$. Since $T$ is a tree, it is sufficient
  to show that for any two relations in $T$, if the relations contain
  $b$, all the relations on the path between the two relations contain
  attribute $b$. Since $\Attr{R} = \Attr{T_2}$ and $b$ appears in
  $T_2$, $b$ is in $\Attr{R}$. Let $S$ be a relation in $T_1$ that
  contains $b$. Since $T_1$ is a join tree, $S$ and $R$ belong to a
  connected subtree, which indicates that all the relations on the
  path from $S$ to $R$ contain $b$. Let $W$ be a relation in $T_2$
  that contains $b$. By a similar argument as above, all the relations
  on the path between $R$ and $W$ contain $b$. Hence we conclude that
  all the relations on the path between $S$ and $W$ contain $b$. Since
  $S$ and $W$ are chosen arbitrarily, we conclude that the claim of
  the lemma holds.
\end{proof}

Lemma~\ref{lem:bushy} completes \cref{sec:case-study-3} and shows that
using the virtual relation method, it is possible to construct a join
tree from a nice bushy plan.

\begin{lemma}
  \label{lem:bushy}
  Let $P$ be a nice bushy plan of an ACQ. Then the virtual relation
  method from Definition~\ref{def:virtual-relation-method} constructs
  a join tree from $P$.
\end{lemma}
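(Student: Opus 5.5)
We argue by induction on the number of replacements, with the invariant that each $T_i$ is a join tree for the relations it contains. The target is that $T_{m+1}$ is a join tree for $\Query$ together with the virtual relations created along the way. Here $T_{m+1}$ is the tree produced by the final step $\Repl{P^m}{T_m}$, where $P^m$ is terminal.

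\textbf{Invariant.} For every $i\ge 1$, $T_i$ is a join tree, meaning a tree satisfying $\RIP$, whose nodes are the original relations discarded so far plus the virtual relations $\Virt{P^1},\dots,\Virt{P^i}$. Moreover, $\Attr{\Virt{P^i}}=\Attr{T_i}$.

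\textbf{Base case.} Consider $T_1 = T^{q_0}$, with $T_0$ empty. By Lemma~\ref{lem:replacement2}, $q_0$ satisfies the reverse of a GYO reduction order. By Lemma~\ref{lem:virtual-gyo}, prepending $V=\Virt{P^1}$ gives a sequence that again satisfies the reverse of a GYO reduction order. Lemma~\ref{lem:bushy-helper} then yields a join tree rooted at $V$, and Proposition~\ref{prop:attrs} gives the attribute equality.

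\textbf{Inductive step.} Consider $T_{i+1}=\Merge{T^{q_i}}{T_i}$.
\begin{enumerate}
\item By Lemma~\ref{lem:replacement2}, $q_i$ satisfies the reverse of a GYO reduction order. Its leaves may include the virtual relation $\Virt{P^i}$, which by Proposition~\ref{prop:obs} is never leftmost.
\item By Lemmas~\ref{lem:virtual-gyo} and~\ref{lem:bushy-helper}, $T^{q_i}$ is a join tree rooted at $\Virt{P^{i+1}}$.
\item The two trees share exactly one node. If $\Virt{P^i}$ lies in $q_i$, the shared node is $u=\Virt{P^i}$, which is a leaf of $T^{q_i}$ and the root of $T_i$. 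By the invariant, $\Attr{u}=\Attr{T_i}$, which is exactly the hypothesis of Lemma~\ref{lem:merge} with $T_1:=T^{q_i}$ and $T_2:=T_i$.
\item Lemma~\ref{lem:merge} covers every attribute $b$ that appears in both trees. An attribute appearing in only one tree has its containing relations inside that tree, which is connected, so its subtree stays connected after merging.
\item Hence $\RIP$ holds in $T_{i+1}$, and $\Attr{\Virt{P^{i+1}}}=\Attr{q_i}=\Attr{T_{i+1}}$ by Proposition~\ref{prop:attrs}.
\end{enumerate}
Edges of the merged tree must join relations sharing attributes. For edges inside $T^{q_i}$ this comes from the construction of \cref{sec:case-study-2}. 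Condition~(1) of niceness (no Cartesian products) ensures keys are nonempty, so no edge is spurious.

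\textbf{Main obstacle: making the shared-node bookkeeping rigorous.} We must show that $q_i$ contains $\Virt{P^i}$ whenever $i\ge1$, or handle the case where it does not. In the latter case $T_i$ is not attached to $T^{q_i}$, and the $\Merge{\cdot}{\cdot}$ hypothesis $X_1\cap X_2=\{u\}$ fails.

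\begin{itemize}
\item \emph{First approach.} Show from the right-to-left choice of the first maximal left-deep subplan that $\Virt{P^i}$ is placed as the right child of $\Par{\Root{q_{i-1}}}{P}$. That position is the rightmost leaf region, so it lies inside the next first maximal left-deep subplan.
\item \emph{Fallback.} Strengthen the invariant to a forest of join trees, each rooted at a virtual relation still present as a leaf of $P^i$. Each replacement merges, via repeated applications of Lemma~\ref{lem:merge}, every tree whose root virtual relation is a leaf of $q_i$.
\end{itemize}

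\textbf{Conclusion.} At termination only one tree remains, since $P^m$ is terminal and its replacement absorbs everything. This tree contains all leaves of $P$, and $\RIP$ holds throughout, so the virtual relation method constructs a join tree from $P$, as Lemma~\ref{lem:bushy} claims.
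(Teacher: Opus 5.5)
\textbf{There is a gap at the step you flagged.} Your skeleton is the paper's own proof: induction over the replacements, Lemma~\ref{lem:merge} applied at the shared virtual node using $\Attr{\Virt{P^i}}=\Attr{T_i}$, and a trivial case for attributes occurring in only one tree. The paper simply asserts that $T^{q_{r-1}}$ and $T_{r-1}$ share exactly the node $\Virt{P^{r-1}}$. You were right to single this out, but your first approach cannot establish it, because the claim is false. Take $A(x,y)$, $B(y,z)$, $C(z,w)$, $D(x,u)$, $E(u,v)$ and $P=(A\Join(B\Join C))\Join(D\Join E)$. This plan is nice: every join shares an attribute, and its only nontrivial left-deep subplans have two leaves. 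The virtual relations $V_1,V_2$ created below have keys $\{x\}$ and $\{y\}$, both contained in $\Attr{A}$. Then $q_0=D\Join E$ and $P^1=(A\Join(B\Join C))\Join V_1$. Although $V_1$ is the rightmost leaf, the only subplan strictly containing it is $P^1$, which is not left-deep because its left input is bushy. Hence $q_1=B\Join C$, and $T^{q_1}$ and $T_1$ are node-disjoint. Next, $q_2=P^2=(A\Join V_2)\Join V_1$ contains two virtual relations, so $T^{q_2}$ meets $T_2$ in two nodes. Your single-tree invariant therefore fails already at $i=2$. The hypothesis $X_1\cap X_2=\{u\}$ of $\Merge{\cdot}{\cdot}$ fails at both steps, so $T_{i+1}$ can only be read as the plain union of nodes and edges.

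\textbf{What is needed.} The lemma rests entirely on your fallback, which is the correct repair; make it the induction itself rather than a contingency. State the invariant as follows: $T_i$ is a forest of pairwise node-disjoint join trees, with one component $C_V$ for each virtual leaf $V$ of $P^i$, rooted at $V$, and with $\Attr{C_V}=\Attr{V}$. It holds vacuously for $i=0$.
\begin{enumerate}
\item In the step, $T^{q_i}$ is a join tree rooted at $\Virt{P^{i+1}}$ by Lemmas~\ref{lem:virtual-gyo} and~\ref{lem:bushy-helper}.
\item Attach to it, one at a time, the components of the virtual leaves of $q_i$. Each attachment meets the current tree exactly in its root $V$: components are disjoint, and their non-root nodes were discarded earlier, so none is a leaf of $q_i$. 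Hence Lemma~\ref{lem:merge} applies with $\Attr{V}=\Attr{C_V}$.
\item The new component has attribute set $\Attr{q_i}=\Attr{\Virt{P^{i+1}}}$. Components of virtual leaves outside $q_i$ carry over unchanged.
\item Terminality enters only at the end: $q_m=P^m$ contains every remaining leaf, so all components are absorbed and $T_{m+1}$ is a single join tree containing every relation of $P$.
\end{enumerate}
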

\begin{proof}  
  Suppose the sequence of replacements stops after the $m$th
  replacement (i.e., $\Repl{P^0}{T_0}, \dots, \Repl{P^m}{T_m}$). Let
  $1 \le r \le m+1$ and let $G(r)$ denote the predicate ``$T_r$ from
  $\Repl{P^{r-1}}{T_{r-1}}$ is a join tree.'' We prove by induction
  that $G(r)$ holds for all $r = 1,\dots,m+1$.

  Base case: By the definition of left-deep plan,
  Definition~\ref{def:nice}~(2), and Lemma~\ref{lem:bushy-helper},
  tree $T_1$ from the first maximal left-deep subplan $q_0$ of $P^0$
  is a join tree.

  Induction step: Let $r$ belong to $\{2,\dots,m+1\}$ and assume that
  $G(r')$ holds for all $r'$ in $\{1,\dots,r-1\}$. We need to prove
  that $G(r)$ holds. Let $q_{r-1}$ be the first maximal left-deep
  subplan of $P^{r-1}$. By Lemma~\ref{lem:replacement2}, $q_{r-1}$
  satisfies the reverse of a GYO reduction order. By
  Definition~\ref{def:virtual-relation-method}, tree $T^{q_{r-1}}$
  from $q_{r-1}$ is a join tree. Next we argue that $T_r$, the output
  of $\Merge{T^{q_{r-1}}}{T_{r-1}}$, is a join tree by showing it
  satisfies $\RIP$. By Proposition~\ref{prop:attrs}, we have
  $\Attr{\Virt{P^{r-1}}} = \Attr{T_{r-1}}$. Consider an attribute $a$
  that appears in some relation in $T_r$. If $a$ appears in $T_{r-1}$
  but not in $T^{q_{r-1}}$, or appears in $T^{q_{r-1}}$ but not in
  $T_{r-1}$, $\RIP$ holds trivially since $T_{r-1}$ and $T^{q_{r-1}}$
  are join trees. If $a$ appears in both $T_{r-1}$ and $T^{q_{r-1}}$,
  since $T^{q_{r-1}}$ and $T_{r-1}$ are join trees, $T^{q_{r-1}}$ and
  $T_{r-1}$ share exactly one node $\Virt{P^{r-1}}$, and
  $\Attr{\Virt{P^{r-1}}} = \Attr{T_{r-1}}$, by Lemma~\ref{lem:merge},
  we conclude that relations in $\Merge{T^{q_{r-1}}}{T_{r-1}}$ that
  contain $a$ form a connected subtree. Combining all three cases, we
  conclude that $G(r)$ holds, as required. Since $P^m$ is terminal and
  $G(r)$ holds, we conclude that the claim of the lemma holds.
\end{proof}

\end{document}